\newtheorem{theorem}{Theorem}
\begin{document}

\title{Massive MIMO Channel Estimation with an Untrained Deep Neural Network}
\author{{
    Eren Balevi, Akash Doshi, and
    Jeffrey G. Andrews}\\	
\thanks{The authors are with the University of Texas at Austin, TX, USA. Email: erenbalevi@utexas.edu, akashsdoshi@utexas.edu, jandrews@ece.utexas.edu. This work has been supported in part by Intel.}				
}

\maketitle 
\normalsize
\begin{abstract}
This paper proposes a deep learning-based channel estimation method for multi-cell interference-limited massive MIMO systems, in which base stations equipped with a large number of antennas serve multiple single-antenna users. The proposed estimator employs a specially designed deep neural network (DNN) to first denoise the received signal, followed by a conventional least-squares (LS) estimation. We analytically prove that our LS-type deep channel estimator can approach minimum mean square error (MMSE) estimator performance for high-dimensional signals, while avoiding MMSE's requirement for complex channel inversions and knowledge of the channel covariance matrix. This analytical result, while asymptotic, is observed in simulations to be operational for just 64 antennas and 64 subcarriers per OFDM symbol. The proposed method also does not require any training and utilizes several orders of magnitude fewer parameters than conventional DNNs.  The proposed deep channel estimator is also robust to pilot contamination and can even completely eliminate it under certain conditions.
\end{abstract}

\begin{IEEEkeywords}
Deep learning, channel estimation, massive MIMO, OFDM.
\end{IEEEkeywords}

\section{Introduction}
In multi-antenna systems, obtaining accurate channel state information (CSI) is a central activity both for precoding the spatial streams before transmission and for coherently combining the received signals from each antenna. This is particularly true for massive multi-input multi-output (MIMO) base stations, which are by definition equipped with a very large number of antennas that transmit to many users at the same time and on the same frequency band \cite{Marzetta10}. Channel estimation is nevertheless quite challenging for multicell massive MIMO cellular networks. This is fundamentally due to pilot contamination -- which is the interference of pilot symbols utilized by the users in neighboring cells -- and noise, but also because operations such as matrix inversion and singular value decomposition (SVD) are impractically complex for large channel matrices.  A low overhead, low complexity, and scalable (in terms of the number of antennas) channel estimator is very desirable for massive MIMO and current solutions have nontrivial drawbacks. This paper leverages recent developments in deep learning to design a novel deep massive MIMO channel estimator that achieves these desirable properties. 

\subsection{Related Work and Motivation}
Conventional DNNs are fairly complex and typically require a large number of parameters to be trained with large datasets \cite{Goodfellow-et-al-2016}. Thus, they are not suitable for channel estimation in wireless systems, where channels change quite rapidly.  A recent special DNN design called a \emph{deep image prior} \cite{Ulyanov18} does not require training, and thus avoids the need for a training dataset.  It was proposed to solve inverse problems in image processing such as denoising and inpainting, and is analogous to reducing noise and pilot contamination, which are two key impediments in the channel estimation process. We modify and optimize this architecture for massive MIMO channel estimation so as to have a moderate number of parameters. One of the salient features of our deep channel estimator lies in not requiring any statistical knowledge about the channel except what can be directly obtained from the received signal. This not only eliminates the need to know or learn the channel statistics, but also makes the estimator applicable to any kind of channel including Gaussian or non-Gaussian, line-of-sight (LOS) or non-LOS (NLOS), and limited or rich scattering.

The seminal paper on massive MIMO uses a least-squares (LS) estimator \cite{Marzetta10}. Despite its low complexity, the LS estimator achieves significantly less accurate channel estimation than minimum mean square error (MMSE) estimation  \cite{Beek95}, which has been used in subsequent massive MIMO studies \cite{Ngo13a}, \cite{Ngo13b}, \cite{Adhikary17}. Although the impact of the channel estimation quality is profound in massive MIMO \cite{Ngo13b}, employing an MMSE estimator is undesirable for two main reasons: (i) it requires an accurate estimate of the channel correlation matrix between the base station and each user, the estimation of which requires a very large number of samples in proportion to the number of antennas and has to be repeated frequently due to mobility;  and (ii) a large matrix inversion is needed for MMSE estimation, and thus the complexity growing as the cube of number of antennas.  For both reasons, MMSE estimation scales very poorly in terms of the base station array size.

A key challenge for massive MIMO is pilot contamination, which is a fundamental limiting factor, since small scaling fading and noise vanish as the law of large numbers kicks in \cite{Marzetta10}.  There are many papers that attempt reliable channel estimation for massive MIMO under pilot contamination. The key idea is usually to exploit the differences among the channel covariance matrices of different users. Specifically, \cite{AdhikaryCaire13} partitions users into groups according to the similarity of their covariance matrices, and serves them accordingly. A similar idea was utilized in \cite{YinGesbert13}, which developed a covariance-aware pilot assignment algorithm with some coordination among base stations. A special pilot scheduling algorithm was developed for sparse massive MIMO channels in \cite{YouSwindlehurst16}. The sparsity of massive MIMO channels was also used for channel estimation with complex iterative approximate message passing and expected-maximization (EM) algorithms in \cite{Wen15}. Another method based on channel statistics was presented in \cite{YinHe16}. In addition to these, there are blind channel estimators relying on channel second-order statistics to reduce the number of pilots \cite{Ngo12}, \cite{Muller14}. These works require estimating large covariance matrices or assume they are somehow available for free. Furthermore, their applicability is limited to NLOS zero-mean Gaussian channels, some of which further need sparsity, which exists only for low angle delay spread. 

Developing an improved LS-type estimator, which like LS does not require knowing the channel correlation/covariance matrices and does not involve matrix inversions, is of significant interest but is an open problem.  However, the performance gap is large: we show that the average spectral efficiency decreases by about 50\% for normal LS estimation versus MMSE estimation.  The introduction of techniques from deep learning points to a potential remedy, since these techniques have been recently used for other challenging communication theory problems without closed-form solutions \cite{Oshea17}, \cite{DornerBrink18}, \cite{BaleviGitlin17}, \cite{YeLi18}, \cite{BalAnd19}, \cite{BalAndAntenna19}.  

\subsection{Contributions}
The main contribution of this paper is to propose a novel low complexity massive MIMO channel estimation technique that is robust to pilot contamination. The novelty is the use of a deep neural network (DNN) for denoising prior to a conventional LS-type operation, which is trivially simple.  The proposed  denoising is done via a specially designed DNN similarly to the deep image prior proposed recently for image processing applications \cite{Ulyanov18},  \cite{Heckel18}.  We optimize the number of parameters and epochs to ensure low complexity, eventually reducing the number of parameters from the order of millions to hundreds or a few thousand.  

We mathematically prove that this proposed deep channel estimator approaches and ultimately achieves the MMSE performance as the product of the number of base station antennas, subcarriers and coherence time interval (in terms of OFDM symbols) becomes large. The simulation results appear to confirm this for moderate dimensionality, namely a $64\times64\times64$ signal block, i.e. the number of antennas, subcarriers, and OFDM symbols are all $64$.  Pilot contamination is reduced in the proposed estimator by learning some prior from the interference-free region in the OFDM grid and patching these priors into the pilot contaminated areas. Additionally, we do not assume that the base stations are perfectly synchronized, so the base stations spread pilots randomly and allocate them to users orthogonally over the time-frequency grid for one coherence time interval. Our results reveal that under some conditions (e.g., when 5\% of the OFDM grid is contaminated by neighboring cells with 4 fold weaker interference power relative to the target signal in the low noise regime) the deep channel estimator can completely remove the interference even if the eigenspace of the desired user and interferer fully overlap. The initial results for the proposed deep channel estimator are presented in \cite{BalAndDCE} basically for single antenna OFDM communication without a theoretical analysis and optimizing the architecture. Additionally, \cite{BalAndDCE} does not consider co-channel interference.

The paper is organized as follows. The system model and problem statement are given in Section \ref{System Model} and Section \ref{Problem Statement}. The deep channel estimator is explained in detail in Section \ref{Deep Channel Estimator Model}, an analysis of which appears in Section \ref{Analysis}. The performance results are illustrated with extensive simulations in Section \ref{Simulations}. The paper concludes with Section \ref{Conclusions}.  

%\textit{Notation}: Transpose and Hermitian operations are demonstrated by $(\cdot)^T$ and $(\cdot)^H$, respectively. The matrix inverse is denoted as $(\cdot)^{-1}$. The set of real and complex numbers are represented by $\mathbb{R}$ and $\mathbb{C}$. The $N\times N$ identity matrix is denoted as $\mathbf{I_{N}}$. The autocorrelation matrix of a vector $A$ is represented by $\mathbf{R_A}$. $||\cdot||$ is used for L2-norm. The trace of a matrix is shown as $\rm{tr}(\cdot)$.

\section{System Model} \label{System Model}
We consider a cellular network that has base stations with large number of antennas and single antenna users. Specifically, base stations comprise $M$ antennas and serve $K$ users such that $K \ll M$. We assume that OFDM symbols with $N_f$ subcarriers are transmitted in a time division duplex (TDD) frame structure. To estimate the reciprocal uplink and downlink channels, users in the same cell send orthogonal pilot sequences with length $N_p$. For the target base station the received signal in the frequency domain $\mathbf{Y} \in \mathbb{C}^{M\times N_fN_p}$ can be expressed as
\begin{equation}\label{OFDM_freq}
\mathbf{Y} = \sum_{k=1}^K\sqrt{\rho_{k}}\mathbf{H_{k}}\otimes \mathbf{x_{k}^H} + \sum_{i\in S_{k}}\sqrt{\rho_{i}}\mathbf{H_{i}}\otimes\mathbf{x_{i}^H} + \mathbf{Z}
\end{equation}
where $\rho_{k}$ is the transmit power, $\mathbf{H_{k}} \in \mathbb{C}^{M\times N_f}$ is the channel between the target base station and its $k^{th}$ user, $\mathbf{x_{k}}\in \mathbb{C}^{N_p\times 1}$ is the pilot sequence used for channel estimation such that $\mathbf{x_{k}^H}\mathbf{x_{k}} = N_p$ and $\otimes$ denotes the Kronecker product. The notation is the same for the second term in the right-hand side (RHS) of \eqref{OFDM_freq}, which represents the users in other cells, and 
\begin{equation}\label{set}
 S_{k} = \{ i | \mathbf{x_{i}} = \mathbf{x_{k}}, i\neq k \}.
\end{equation}
The last term $\mathbf{Z} \in \mathbb{C}^{M\times N_fN_p}$ denotes the Gaussian noise matrix whose independent and identically distributed (i.i.d.) elements are zero-mean Gaussian random variables with variance $\sigma^2$. 

The $k^{th}$ user signal in the base station is obtained by 
\begin{equation}\label{kthsignal}
\mathbf{Y_{k}} = \mathbf{Y}(\mathbf{I_{N_f}}\otimes\mathbf{x_{k}} )
\end{equation}
such that $\mathbf{Y_{k}} \in \mathbb{C}^{M\times N_f}$. Due to the mixed-product property of the Kronecker product
\begin{eqnarray}
( \mathbf{H_{k}} \otimes \mathbf{x_{k}^H} )(\mathbf{I_{N_f}}\otimes\mathbf{x_{k}} ) &=& (\mathbf{H_{k}} \mathbf{I_{N_f}}) \otimes (\mathbf{x_{k}^H} \mathbf{x_{k}} ) \\ \nonumber
& = &  N_p\mathbf{H_{k}},
\end{eqnarray}
it is straightforward to express \eqref{kthsignal} as
\begin{equation}\label{kthsignalopen}
\mathbf{Y_{k}} = \sqrt{\rho_{k}}N_p\mathbf{H_{k}} + \sum_{ i \in S_k }\sqrt{\rho_{i}}N_p\mathbf{H_{i}} + \mathbf{Z_{k}}
\end{equation}
where
\begin{equation}
\mathbf{Z_{k}} = \mathbf{Z}(\mathbf{I_{N_f}}\otimes\mathbf{x_{k}} ).
\end{equation}
As can be observed in \eqref{set}, other users in other base stations can also use the same pilot sequences with the $k^{th}$ user in the target cell. This is because pilots are limited by the time-frequency resources, and so it is not possible to allocate orthogonal pilots for all users in all cells at least not without greatly degrading the ability to transmit information-bearing symbols. The resulting interference is known as pilot contamination. 

\section{Problem Statement} \label{Problem Statement}
To have more compact expressions, the matrices are defined as vectors by concatenating the columns, which are given by
\begin{equation}
\mathbf{\underline{Y}_{k}} = \text{vec}( \mathbf{Y_{k}} )
\end{equation}
where $\mathbf{\underline{Y}_{k}} \in \mathbb{C}^{MN_f}$. The same notation is utilized for $\mathbf{\underline{H}_{k}}$, $\mathbf{\underline{H}_{i}}$ and $\mathbf{\underline{Z}_{k}}$. Substituting \eqref{kthsignalopen} with these yields
\begin{equation}\label{kthsignalvec}
\mathbf{\underline{Y}_{k}} = \sqrt{\rho_{k}}N_p\mathbf{\underline{H}_{k}} + \sum_{ i \in S_k }\sqrt{\rho_{i}}N_p\mathbf{\underline{H}_{i}} + \mathbf{\underline{Z}_{k}}.
\end{equation}
To estimate the channel between the $k^{th}$ user and the target base station, \eqref{kthsignalvec} is multiplied with a linear matrix such that
\begin{equation}
\mathbf{\underline{\hat{H}}_{k}} = \mathbf{A_{k}}\mathbf{\underline{Y}_{k}}
\end{equation}
where
\begin{equation}
\mathbf{\underline{\hat{H}}_{k}} = \text{vec}(\mathbf{\hat{H}_{k}} )
\end{equation}
and
\begin{align} \label{channel_estimators}
    \mathbf{A_{k}} = 
    \begin{cases}
			 \frac{1}{\sqrt{\rho_{k}}N_p}\mathbf{I_{MN_f}}, \text{for LS estimation} \\
       \sqrt{\rho_{k}}\mathbf{R_{\underline{H}_{k}}}( \Gamma_{k} + \sigma^2\mathbf{I_{MN_f}} )^{-1}, \text{for MMSE estimation} 
    \end{cases} 
\end{align}
in which
\begin{equation}
\Gamma_{k} = \sum_{  i \in S_k } \sqrt{\rho_{i}}N_p\mathbf{R_{\underline{H}_{i}}}.
\end{equation}
As is clear from \eqref{channel_estimators}, LS estimation has very low complexity, whereas MMSE estimation requires not only the autocorrelation matrices of all users that use the same pilot sequence but also a matrix inversion, the complexity of which scales as $(MN_f)^2$. Hence, the MMSE estimator is not a viable option for systems with large number of antennas $M$ and/or subcarriers $N_f$ \cite{Wen15}. Despite the appeal of the LS estimator in terms of low complexity, it provides much less accurate estimation. To illustrate this, we consider average spectral efficiency
\begin{equation}\label{SE}
\eta = \frac{N_p}{N}\mathbb{E}[\log_2(1+\text{SINR})]
\end{equation}
where $N$ is the coherence time interval. The average sum of the spectral efficiency based on \eqref{SE} for LS and MMSE estimators is depicted for different combiners, namely for maximum ratio (MR), zero-forcing (ZF), and MMSE combiners in Fig. \ref{fig:MMSE_vs_LS}. As can be shown, there is a considerable decrease in the average sum spectral efficiency due to LS channel estimation, in particular for MMSE and ZF combiners. A channel estimation technique that exhibits MMSE estimator performance with LS estimator complexity is highly desirable.
\begin{figure} [!t] 
\centering 
\includegraphics [width=5in]{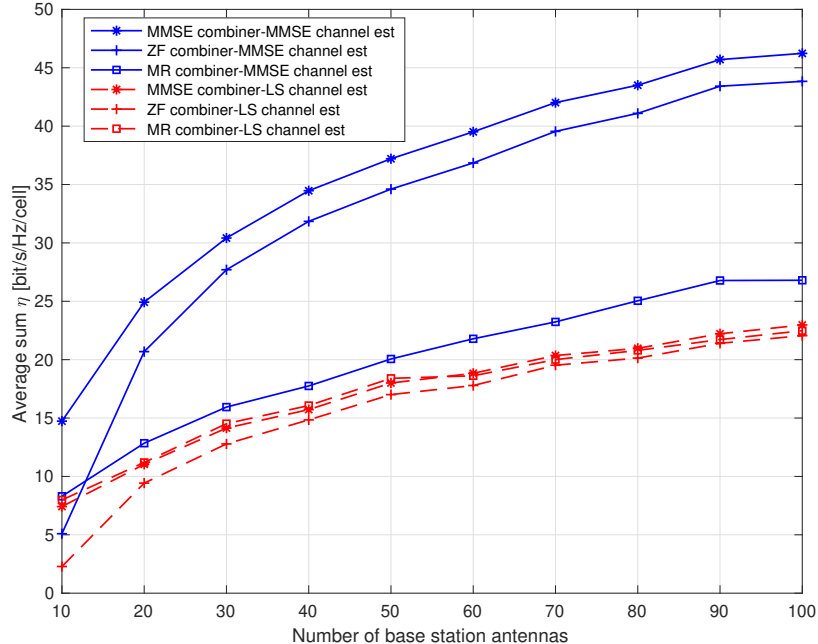}
\caption{Average sum spectral efficiency of LS and MMSE channel estimators for different combining techniques with increasing number of antennas.}\label{fig:MMSE_vs_LS}
\end{figure}

We consider deep learning as a remedy, however the high dimensionality of the signals is a challenge. This is because the higher the signal dimension is, the larger the number of necessary parameters in the DNN model, which needs to be trained with a dataset whose size is proportional to the number of parameters. To illustrate, a fully connected neural network for an $M$ antenna OFDM system requires $U=MN_f$ input neurons. If there are $l$ layers in this DNN, each of which has $k_iU$ units for $i=0, 1, \cdots, l-1$, this leads to $\sum_{i=0}^{l-1} k_ik_{i+1}U^2$ parameters, where $k_0=k_l=2$ due to the real and imaginary parts of the signal. This can easily yield millions of parameters, and thus requires a very large training dataset. To illustrate, if $M=64$ and $N_f=1024$, this yields approximately $5*10^{10}$ parameters for $6$ layers when $k_i = 2$ for $i=0, 1, \cdots 6$. Although convolutional neural networks can considerably decrease the number of parameters, a large training dataset is still necessary. This is obviously an impediment in using neural networks for real-time channel estimation, where only a very limited number of pilots (i.e. labels)\footnote{There can be some unsupervised or semi-supervised learning models that make channel estimation with no labels or with very limited labels. However, there is not any generic known channel estimation model yet for this method, and this subject remains mostly open.} can be used. 

In this paper, we propose a new DNN based channel estimation method \textbf{that does not require training}. Our main idea is to denoise the received signal via the DNN and then use that denoised signal for LS channel estimation instead of the raw received signal. Since the proposed estimator does not require training, there is no complexity increase due to training. This also prevents the inevitable performance loss for estimators that are trained for some channel realizations but then used in others. The details of the proposed method are elaborated next.

\section{Deep Channel Estimator Model} \label{Deep Channel Estimator Model} 
Training overhead is the primary obstacle to making state-of-the-art DNNs practically implementable for high-dimensional channel estimation. In the context of image processing, a recent paper shows that training is not necessary for a special DNN design, which is known as Deep Image Prior (DIP), for solving the inverse problems of denoising and inpainting \cite{Ulyanov18}. The main idea behind this untrained DNN or DIP model is to fit the parameters of a neural network for each image on the fly without training them on large datasets beforehand. This model was later optimized to reduce the number of required parameters \cite{Heckel18}. Both \cite{Ulyanov18} and \cite{Heckel18} observed very efficient denoising and inpainting performance thanks to the specifically designed DNN architecture, which has low impedance for natural images and high impedance against noise. 

For massive MIMO-OFDM channel estimation, denoising and inpainting are analogous to eliminating noise and pilot contamination, respectively, and adapting DIP model to the channel estimation problem is promising. This is because (i) in communication systems, there is a limited number of pilots (or labeled data), and thus the architectures based upon large training dataset are not feasible; (ii) in conventional DNNs, training and testing have to be done for the same channel realization to obtain better performance, which brings in heavy training overhead; and (iii) noise and interference are the main impediments that hinders to make a reliable channel estimation for massive MIMO. Motivated by these factors, the specifically designed DNN architecture for the DIP model is leveraged to make channel estimation. In particular, we modify the input and output layers of the one variant of DIP architecture \cite{Heckel18}, and use it as a baseline, which we term a \textit{deep channel estimator}.

The proposed deep channel estimator is composed of two stages. In the first stage, a less noisy signal is generated from the received signal through a specially designed DNN architecture mentioned above, and some prior information is obtained to mitigate interference. In the second stage the generated or filtered signal is multiplied by the Hermitian of the pilot sequence for channel estimation. This apparently means that we propose an LS-type channel estimator with the only difference being that the signal generated by the DNN is used instead of the received signal. By doing that the low complexity nature of LS estimator is combined with the noise reduction capability of the DNN so as to have a near MMSE estimation performance. The price paid for the proposed deep channel estimator is the need for fitting the parameters of the DNN periodically for each OFDM grid, whose period is determined by the channel coherence time (or equivalently maximum Doppler spread). However, the complexity increase is quite reasonable thanks to the low number of parameters, as will be explained. 

The received signal in \eqref{OFDM_freq} can be equivalently written in 3-dimensional form as 
\begin{equation} \label{OFDM_grid}
\mathbf{Y} = \{\{\{Y[m,q,n]\}_{m=1}^{M}\}_{q=1}^{N_{f}}\}_{n=1}^{N}.
\end{equation}
where $Y[m,q,n]$ is the received signal of the target base station in the $m^{th}$ antenna for the $q^{th}$ subcarrier in the $n^{th}$ OFDM symbol. Notice that \eqref{OFDM_grid} is expressed in terms of the length of the coherence time instead of the number of pilots, which contains $N$ OFDM symbols. This is because the parameters of the DNN has to be fitted periodically with coherence time. The real and imaginary part of \eqref{OFDM_grid} is separated into $2$ independent channels in our architecture, since tensors do not support complex operations. This tensor representation of $\mathbf{Y}$ is denoted as $\mathbf{Y_T}$. Specifically, $\mathbf{Y_T} \in \mathbb{R}^{M\times N_f\times N \times 2}$, where the dimensions are for the spatial, frequency, time, and complex domains. In our architecture, we stack the spatial and complex domains which leads to $\mathbf{Y_T} \in \mathbb{R}^{N_s\times N_f\times N }$, where $N_s = M \times 2$.

The working principle of the deep channel estimator is to generate $\mathbf{Y_T}$ from a randomly chosen input tensor $Z_0$, which can be considered as an input filled with uniform noise, through hidden layers, whose weights are also randomly initialized, and then optimized via gradient descent. The overall DNN model that depicts the input, output and hidden layers for a $3$-dimensional communication signal is given in Fig. \ref{fig:e2emodel}.
\begin{figure*} [!t] 
\centering 
\includegraphics [width=5in]{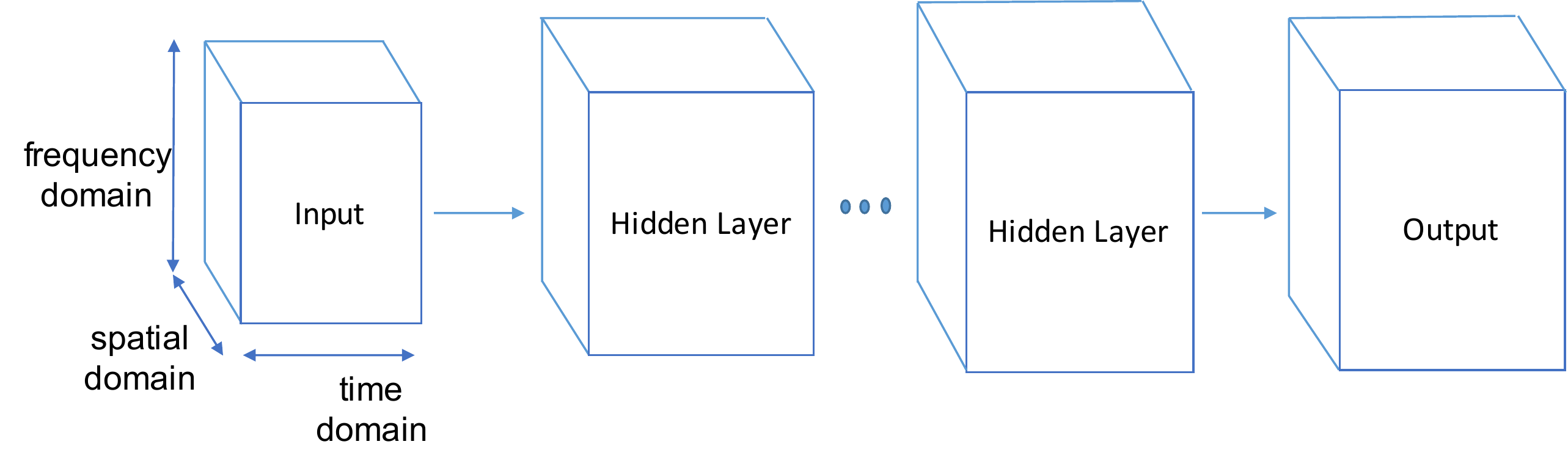}
\caption{The DNN architecture to denoise and inpaint the received signal before channel estimation for a $3$-dimensional communication signal.} \label{fig:e2emodel}
\end{figure*}
The key component in the aforementioned DNN model is the hidden layers, which are composed of four major components. These are: (i) a $1\times1$  convolution, (ii) an upsampler, (iii) a rectified linear unit (ReLU) activation function, and (iv) a batch normalization. A $1\times1$ convolution means that each element in the time-frequency grid is processed with the same parameters through the spatial domain, which changes the dimension. More precisely, an $N_s^{(i-1)}\times1\times 1$ data vectors in the $i^{th}$ hidden layer is element-wise multiplied with an $N_s^{(i-1)}\times1\times 1$ kernel and summed. There are $N_s^{(i)}$ different kernels, which are shared for each slot in the time-frequency axes. Hence, the spatial dimension becomes $N_s^{(i)}$. This can be equivalently considered as each vector in the time-frequency slot being multiplied with the same (shared) $N_s^{(i)} \times N_s^{(i-1)}$ matrix. In what follows, upsampling is performed to exploit the couplings among neighboring elements in the time and frequency grid. More precisely, the time-frequency signal is upsampled with a factor of $2$ via a bilinear transformation. Next, the ReLU activation function is used to make the model more expressive for nonlinearities. The last component of a hidden layer makes batch normalization for a batch size of $1$ to avoid vanishing gradients. This structure of a hidden layer is portrayed in Fig. \ref{fig:model}. All the hidden layers have the same structure except for the last hidden layer, which does not have an upsampler.
\begin{figure*} [!t] 
\centering 
\includegraphics [width=6in]{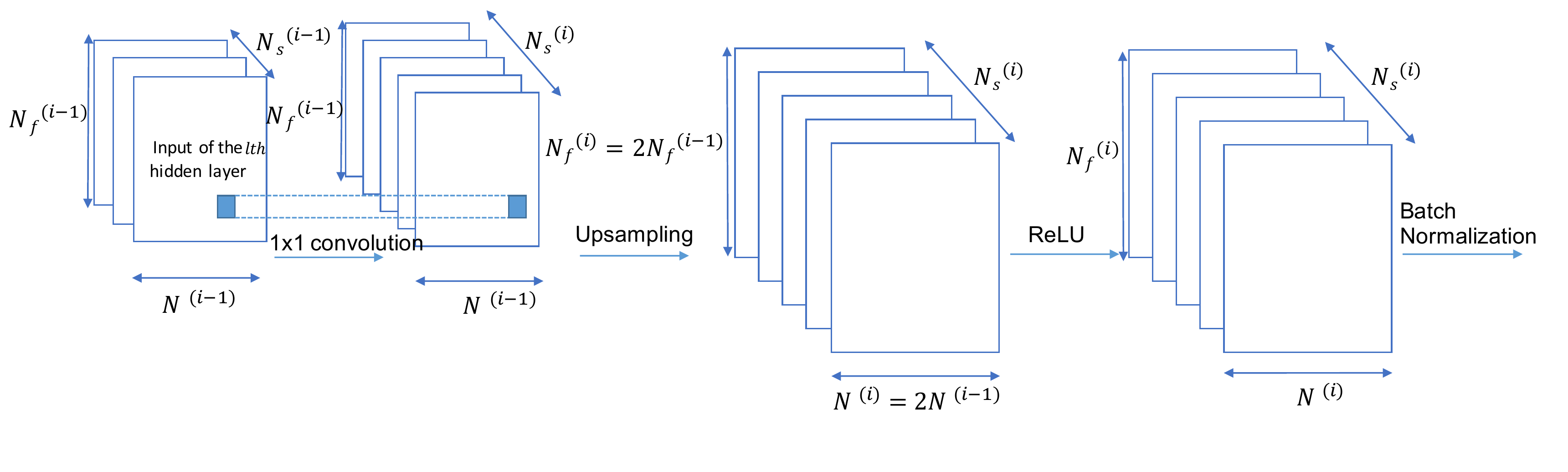}
\caption{The structure of the $i^{th}$ hidden layer, whose input dimension is $N_s^{(i-1)} \times N_f^{(i-1)} \times N^{(i-1)}$ and output dimension is $N_s^{(i)} \times N_f^{(i)} \times N^{(i)}$. Note that $N_f^{(i)} = 2N_f^{(i-1)}$ and $N^{(i)} = 2N^{(i-1)}$. The spatial dimensions $N_s^{(i-1)}$ and $N_s^{(i)}$ are the hyperparameters that are used by the $1\times1$ convolution operations.} \label{fig:model}
\end{figure*}

The mathematical representation of the aforementioned architecture is given next. Accordingly, the tensor $\mathbf{Y_T}$ is parameterized for the $l+1$ layer as
\begin{equation}
\mathbf{\hat{Y}_T} = f_{\theta_l}(f_{\theta_{l-1}}( \cdots f_{\theta_0}(Z_0)))
\end{equation}
where the input $Z_0$ has a dimension of $N_s^{(0)} \times N_f^{(0)} \times N^{(0)}$ in the spatial, frequency and time domain, respectively. These dimensions are determined according to the number of hidden layers and the output dimension, in which $N_s^{(l)}=N_s$,  $N_f^{(l)}=N_f$, $N^{(l)}=N$. The layers from $0$ to $l-1$ are counted as a hidden layer, and for $i=0,1,\cdots,l-2$
\begin{equation} \label{eq:Op1}
f_{\theta_{i}}=\text{BatchNorm}(\text{ReLU}(\text{Upsampler}(\theta_i\circledast Z_{i}))
\end{equation}
where $Z_i$ is the input of the $i^{th}$ hidden layer, $\theta_i$ are the parameters, and $\circledast$ represents the so-called ``convolution'' operator, which actually refers to cross-correlation in signal processing. More precisely, a $1\times1$ convolution is utilized as a cross-correlator, which means that the spatial vector for each element of the time-frequency grid is multiplied with the same shared parameter matrix to obtain the new spatial vector for the next hidden layer. The last hidden layer is
\begin{equation} \label{eq:Op2}
f_{\theta_{l-1}}=\text{BatchNorm}(\text{ReLU}(\theta_{l-1}\circledast Z_{l-1}),
\end{equation}
and the output layer is
\begin{equation}
f_{\theta_{l}}=\theta_{l}\circledast Z_{l}.
\end{equation}
All the parameters can be represented as 
\begin{equation}
\Theta = (\theta_{0}, \theta_{1}, \cdots, \theta_{l}),
\end{equation}
which are optimized according to the square of $l_2$-norm
\begin{equation}\label{obj_func}
\Theta^* = \arg \min_\Theta ||\mathbf{Y_T}-\mathbf{\hat{Y}_T}||_2^2.
\end{equation}
The output of the DNN for the optimized parameters is
\begin{equation}\label{outDNN}
\mathbf{Y_T^*} = f_{\Theta^*}(Z_0)
\end{equation}
where $ \mathbf{Y_T^*}= \{\{\{Y^*[m, q, n]\}_{m=1}^{M}\}_{q=1}^{N_{f}}\}_{n=1}^{N}$. After generating the denoised signal in \eqref{outDNN} from a random input $Z_0$, an LS channel estimator is employed by multiplying \eqref{outDNN} with the Hermitian transpose of the pilot sequence. 

\section{Theoretical Analysis} \label{Analysis} 
The denoising capability of the proposed LS-type deep channel estimator determines how close it can approach the MMSE estimation performance. Next, we prove that our architecture can filter all the noise for high-dimensional signals, e.g., for massive MIMO-OFDM, and can achieve the MMSE estimator performance.

\begin{theorem}
The proposed LS-type deep channel estimator achieves the MMSE estimator performance as the product of the number of base station antennas $M$, number of subcarriers $N_f$ and coherence time interval $N$ goes to infinity, assuming there is no pilot contamination. That is,
\begin{equation} \label{dce_mmse}
\lim_{MN_fN \rightarrow \infty}\epsilon_{\rm dce} = \lim_{MN_fN \rightarrow \infty}\epsilon_{\rm mmse} = 0
\end{equation}
where $\epsilon_{\rm dce}$ and $\epsilon_{\rm mmse}$ are the channel estimation errors for the proposed deep channel estimator and conventional MMSE estimator, respectively.
\end{theorem}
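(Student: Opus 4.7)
The plan is to prove the two limits separately and in sequence: first show that the classical MMSE estimation error vanishes in the large-dimension regime, then show that the proposed LS-type deep channel estimator inherits the same property because the DNN in front of the LS step drives the noise residual to zero at the same rate. The starting point is to write the error of the deep channel estimator in a form that isolates the DNN's denoising quality. Without pilot contamination, \eqref{kthsignalvec} reads $\underline{\mathbf{Y}}_{k} = \sqrt{\rho_k}N_p \underline{\mathbf{H}}_k + \underline{\mathbf{Z}}_k$, so $\underline{\hat{\mathbf{H}}}^{\text{dce}}_k - \underline{\mathbf{H}}_k = \tfrac{1}{\sqrt{\rho_k}N_p}\bigl(\underline{\mathbf{Y}}^{*}_k - \sqrt{\rho_k}N_p \underline{\mathbf{H}}_k\bigr)$; hence $\epsilon_{\text{dce}}$ is exactly a scaled version of the mismatch between the DNN output \eqref{outDNN} and the noiseless received signal. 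The entire proof therefore reduces to bounding this mismatch.

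For the MMSE leg, I would use the eigen-decomposition of $\mathbf{R}_{\underline{\mathbf{H}}_k}$ to write $\epsilon_{\text{mmse}}$ as a sum of per-mode Wiener errors $\lambda_i \sigma^2 / (\rho_k N_p^2 \lambda_i + \sigma^2)$. Under any physically meaningful channel model (finite-scatterer, rich Gaussian, or mixed), the dominant eigenvalues $\lambda_i$ grow with $MN_fN$ while the total channel energy $\operatorname{tr}(\mathbf{R}_{\underline{\mathbf{H}}_k})$ sets the normalization. A short computation shows that the normalized MMSE error is controlled by the tail of the eigenvalue distribution, and that this tail vanishes as $MN_fN \to \infty$, giving the right-hand equality in \eqref{dce_mmse}.

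For the deep channel estimator leg, I would invoke the denoising analysis of the deep decoder from \cite{Heckel18}. The architecture of Section \ref{Deep Channel Estimator Model}, built from $1\times 1$ convolutions, bilinear upsamplers, ReLUs and batch norm, has an output manifold whose effective dimension is dictated by the hidden-layer widths and is far smaller than $N_s N_f N$. The first step is to argue that the noiseless tensor $\sqrt{\rho_k}N_p\, \underline{\mathbf{H}}_k$ (viewed as $\mathbf{Y_T} - \mathbf{Z_T}$) is well-approximated by an element of this manifold: this uses the low-rank/low-pass nature of massive MIMO-OFDM channels in the angle-delay-Doppler domain, which matches the low-pass bias of the upsampling-based architecture. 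The second step is to argue that the $\ell_2$ minimization \eqref{obj_func} behaves like a projection onto the manifold, so by the same concentration argument used for the deep decoder the noise component of the DNN output scales as the ratio of effective manifold dimension to ambient dimension times $\|\mathbf{Z_T}\|^2$. Dividing by $\rho_k N_p^2$ and taking $MN_fN \to \infty$ then gives $\epsilon_{\text{dce}} \to 0$ at the same rate as the MMSE error.

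The main obstacle I anticipate is making the ``projection onto a low-dimensional manifold'' step rigorous for a nonlinear, non-convex network: \cite{Heckel18} proves a clean statement only for certain deep decoders, and the present architecture differs in its input/output layers and in the use of batch norm. I would handle this by either (i) restricting the analysis to the linearized regime around the fitted point, where the Jacobian of $f_{\Theta}(Z_0)$ plays the role of a low-rank projector and the denoising bound becomes a linear algebra statement, or (ii) appealing to the neural tangent kernel limit of the architecture to inherit the low-pass spectral profile. A secondary, but softer, difficulty is ensuring that the approximation error (how well channels lie in the DNN's range) is negligible compared to the noise residual at the relevant scaling; this is where the asymptotic nature of the statement is essential, and where the assumption on growth of $MN_fN$ rather than $M$, $N_f$, or $N$ individually plays its role.
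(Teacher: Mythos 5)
Your plan follows essentially the same route as the paper's proof: both reduce $\epsilon_{\rm dce}$ to the residual noise left by the DNN before the trivial LS step, both invoke the deep-decoder denoising bound of \cite{Heckel18} --- whose right-hand side is one minus a ratio of squared network width times a logarithmic factor to the ambient dimension $N_s N_f N$, exactly \eqref{noise_sup_inequality} --- and both dispose of the MMSE leg by eigen-decomposing the error covariance into per-mode Wiener errors that vanish in the large-array limit. The one place you genuinely diverge is the step you correctly single out as the crux: showing that the hidden-layer widths needed to fit the noiseless channel do not themselves grow fast enough to spoil that ratio. You propose to handle it by linearizing around the fitted point or by an NTK-style argument; the paper instead appeals to over-parameterization results \cite{DuLee18} to argue that the last hidden width need only scale as $\sqrt{2r}$ with $r$ the channel rank (see \eqref{spatial_inc}), so the numerator of the bound grows sublinearly while the denominator grows linearly in $MN_fN$. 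Neither treatment is fully rigorous, and the representation-error issue you flag as a ``secondary difficulty'' (whether the noiseless channel actually lies near the network's range) is also left implicit in the paper; your version is arguably more honest about where the argument is soft. Your explicit identity $\mathbf{\underline{\hat{H}}_{k}}-\mathbf{\underline{H}_{k}}=\frac{1}{\sqrt{\rho_k}N_p}\bigl(\mathbf{\underline{Y}_{k}^{*}}-\sqrt{\rho_k}N_p\mathbf{\underline{H}_{k}}\bigr)$ is precisely what underlies the paper's final step \eqref{ls}, where the LS error $\mathrm{rank}(\mathbf{R})/\mathrm{SNR}$ is driven to zero by the noise suppression \eqref{noise_sup_1}.
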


\begin{proof}
The proof is composed of three parts. In the first part, we generalize the noise suppression level of the architecture \cite{Heckel18} for the deep channel estimator as 
\begin{equation} \label{noise_sup_inequality}
\underset{ \Theta }{\mathrm{min\ }} \frac{||\textbf{n} - \textbf{n}_{\rm fit}(\Theta) ||^2}{||\textbf{n}||^2} \geq  1 - c\left(\frac{{\left( \prod_{i=0}^{l-1}N_s^{(i)}\right)^{\frac{2}{l}} \log\left(\prod_{i=0}^{l-1}N_f^{(i)}N^{(i)}\right)}}{N_s^{(l)}N_f^{(l)}N^{(l)}}\right),
\end{equation}
where $\textbf{n}$ is the noise vector in the received signal, and $\textbf{n}_{\rm fit}(\Theta)$ shows the fitted amount of noise at the output of the deep channel estimator such that $||\textbf{n}_{\rm fit}(\Theta)||=0$ means that all noise is cancelled, and $c$ is a numerical constant. Although \eqref{noise_sup_inequality} is satisfied with probability at least $1-2(N_f^{(0)}N^{(0)})^{-\left( \prod_{i=0}^{l}N_s^{(i)}\right)^{\frac{2}{l+1}}}$ for $N_s^{(l)}=1$ in \cite{Heckel18}, this probability goes to 1 for our case due to the high-dimensional massive MIMO-OFDM signal model, i.e.,
\begin{equation}
\lim_{M \rightarrow \infty} \left\{1-2(N_f^{(0)}N^{(0)})^{-\left( \prod_{i=0}^{l}N_s^{(i)}\right)^{\frac{2}{l+1}}} \right\} = 1
\end{equation}
since $M=N_s^{(l)}/2$.
%\begin{equation} \label{noise_sup_inequality}
%\max n_{\rm{supp}} = \frac{||\textbf{n} - \textbf{n}_{\rm fit} ||^2}{||\textbf{n}||^2} \leq \exp \left\{c\frac{{\left( \prod_{i=0}^{l-1}N_s^{(i)}\right)^{\frac{2}{l}} \log\left(\prod_{i=0}^{l-1}N_f^{(i)}N^{(i)}\right)}}{N_s^{(l)}N_f^{(l)}N^{(l)}}\right\},
%\end{equation}
Hence, with the optimum parameters $\Theta^*$, \eqref{noise_sup_inequality} can be expressed in terms of the maximum noise suppression level
\begin{equation} \label{noise_sup}
n_{\rm{supp}} = \inf\left\{\frac{||\textbf{n} - \textbf{n}_{\rm fit}(\Theta^*) ||^2}{||\textbf{n}||^2}\right\} = 1 - c\left(\frac{{\left( \prod_{i=0}^{l-1}N_s^{(i)}\right)^{\frac{2}{l}} \log\left(\prod_{i=0}^{l-1}N_f^{(i)}N^{(i)}\right)}}{N_s^{(l)}N_f^{(l)}N^{(l)}}\right)
\end{equation}
with probability 1. 

In the second part of the proof, we make use of deep learning theory regarding overparameterization. We observe that the denominator of the second term in the right-hand side of \eqref{noise_sup} scales with the dimension of the received signal, since $N_s^{l}=2M$, $N_f^{l}=N_f$, $N^{l}=N$, whereas the dimension of the hidden layers does not. In particular, 
\begin{equation} \label{time-freq-scale}
\log\left(\prod_{i=0}^{l-1}N_f^{(i)}N^{(i)}\right) = \log\left( \frac{N_f^{(l-1)}N^{(l-1)}}{\prod_{l=0}^{l-1}4^l}\right)^l 
\end{equation}
due to \eqref{eq:Op1} and \eqref{eq:Op2}, in which the denominator of the right-hand side of \eqref{time-freq-scale} is scaled by 4 due to the oversampling by 2 in the time and frequency axes. Since $N_f^{(l)}N^{(l)}\gg l$,
\begin{equation} \label{time_freq_inc}
 \log\left( \frac{N_f^{(l-1)}N^{(l-1)}}{\prod_{l=0}^{l-1}4^l}\right)^l  \ll N_f^{(l)}N^{(l)}.
\end{equation}
Now, we proceed to see how the spatial dimension of the hidden layers scales with the number of antennas. Accordingly, the objective function in \eqref{obj_func} is written in terms of energy minimization \cite{Ulyanov18} 
\begin{equation}\label{energy_min}
\min_\Theta E(\mathbf{Y_T}, \mathbf{\hat{Y}_T}).
\end{equation}
It is standard to express \eqref{energy_min} in terms of a function approximator $G(\cdot)$ and a regularizer $R(\cdot)$ as
\begin{equation}\label{DNN_eqv}
\min_\Theta E(\mathbf{Y_T},  G({\Theta})) + R(G({\Theta})).
\end{equation}
For \eqref{DNN_eqv}, increasing the width of the last hidden layer while keeping the dimension of the other hidden layers fixed is sufficient to fit the received signal \cite{GlorotBengio10}. Using \eqref{DNN_eqv} by defining
\begin{equation}
R(G({\Theta})) = ||\Theta||_F^2 + <\mathbf{A}, \Theta^T\Theta>
\end{equation}
where $\mathbf{A}$ is a random positive semidefinite matrix with arbitrarily small Frobenius norm, and writing the last layer of the deep channel estimator for a time-frequency slot as %an inner product in the matrix space in terms of using the square of the measurements
\begin{equation}
(N_s^{(l)})^2=(\theta_{l}\circledast N_s^{(l-1)})^2,
\end{equation}
it is sufficient to increase the spatial dimension as \cite{DuLee18}
\begin{equation} \label{spatial_inc}
N_s^{(l-1)} = \sqrt{2r}
\end{equation}
where $r$ is the rank of the channel that shows the number of independent received samples. That is, it increases sublinearly with the increasing number of antennas. %, because MIMO channels are not full rank due to correlations, which means that their subspace dimension does not increase linearly with the number of antennas\footnote{This is also verified in the simulations while optimizing the number of parameters such that $N_s^{(i)} = 8$ for single antenna, whereas $N_s^{(i)} = 16$ for 64 antennas.}. 
Due to \eqref{time-freq-scale} and \eqref{spatial_inc}, 
\begin{equation}\label{noise_sup_1}
\lim_{MN_fN \rightarrow \infty}n_{\rm{supp}} = 1.
\end{equation}

In the third part, we derive the asymptotic channel estimation errors in view of the first two parts. The channel estimation error for MMSE estimator can be expressed in terms of covariance matrix  
\begin{equation}\label{mmse}
\epsilon_{\rm{mmse}} = \rm{tr}(\mathbf{C})
\end{equation}
where
\begin{equation} \label{Cmmse}
\mathbf{C} = \mathbf{R}-\mathbf{R}\left(\mathbf{R}+\frac{1}{\rm SNR}\mathbf{I}\right)^{-1}\mathbf{R}.
\end{equation}
In terms of the eigenvalues of the correlation matrix $\mathbf{R}$, \eqref{mmse} can be written using \eqref{Cmmse} as
\begin{equation}
\epsilon_{\rm{mmse}} = \sum_{i=1}^{\rm{rank}(\mathbf{R})}\lambda_m-\frac{\lambda_m^2}{\lambda_m+1/{\rm SNR}}.
\end{equation}
Thus, 
\begin{equation}
\lim_{MN_fN \rightarrow \infty}\epsilon_{\rm{mmse}} = 0,
\end{equation}
since uncorrelated noise vanishes in massive MIMO. In the case of LS estimator, the error is equal to 
\begin{equation}\label{ls}
\epsilon_{\rm dce} = \epsilon_{\rm{ls}} =  \frac{\rm{rank}(\mathbf{R})}{\rm SNR},
\end{equation}
and 
\begin{equation}
\lim_{MN_fN \rightarrow \infty}\epsilon_{\rm{dce}} = 0,
\end{equation}
due to \eqref{noise_sup_1}. This completes the proof of \eqref{dce_mmse}. 
\end{proof}

Notice that if the spatial dimensions of all hidden layers are increased equally instead of only increasing the last hidden layer spatial dimension, this would result in less increase than  \eqref{spatial_inc}. This means that the spatial dimension increases at worst with the square root of the rank of the channel\footnote{Our empirical results support this argument. To illustrate, $N_s^{(l-1)} = 8$ for single antenna, whereas $N_s^{(l-1)} = 16$ for 64 antennas.}. Another important point regarding this theorem is that the deep channel estimator can ultimately achieve zero estimation error without increasing the transmission power, instead just by increasing the number of antennas and subcarriers.

Even if there is pilot contamination in the environment, the proposed estimator can inherently resist (and even completely eliminate) interference up to some point. However, this holds only if the interference exists in some limited region of the OFDM grid of the desired signal. We associate this behavior with the inpainting capability of the DIP architecture \cite{Ulyanov18}. This implies that our LS-type estimator under pilot contamination can give the MMSE estimation performance in single-cell massive MIMO even for the multicell case, if the pilot contamination is sufficiently localized in time and frequency. This success can be attributed to learning prior information from some interference-free regions and then patching this prior information into the interference regions. In this sense, it is similar to dictionary learning \cite{NIPS2008_3448}. The comparison of various dictionary learning methods with our estimator as well as integrating our model into one of the dictionary learning methods for enhanced interference mitigation are left as future work.  

\section{Simulations} \label{Simulations}
The proposed deep channel estimator is compared with the traditional LS and MMSE channel estimators given in \eqref{channel_estimators} using the ``LTE-Extended Pedestrian A Model (EPA)'' and ``Kronecker'' channel model. The performance metric is the normalized mean square error (NMSE), which is defined as 
\begin{equation} \label{eq:NMSE}
\text{NMSE} = \mathbb{E}\left[\frac{||\mathbf{\underline{H}_{j,k}}-\mathbf{\underline{\hat{H}}_{j,k}}||_2^2}{||\mathbf{\underline{H}_{j,k}}||_2^2}\right]
\end{equation}
where $\mathbf{\underline{H}_{j,k}}$ and $\mathbf{\underline{\hat{H}}_{j,k}}$ are the column vectors that specify the actual and the estimated channel taps in the frequency domain over all antennas, respectively. In this section, we first state the experimental details, then provide the simulation results and discuss the complexity of the estimator. %In case of pilot contamination, the performance of the proposed estimator is compared with the idealistic LS and MMSE curves in which interference is hypothetically mitigated. This is done to quantify the interference cancellation capability of our estimator.

\subsection{Experimental Details}
The deep channel estimator is implemented in Pytorch \cite{paszke2017pytorch} with $6$ hidden layers, i.e., $l=6$ as described in Section \ref{Deep Channel Estimator Model}. Without any loss of generality, the spatial dimension of the hidden layers is taken as $N_s^{(i)}=k$ for $i = 0,1,\cdots,l-1$. Then, the number of parameters (or equivalently the value of $k$) is optimized using two Nvidia GeForce GTX 2070 GPUs for acceleration\footnote{Note that the dimension of the time and frequency axes of the hidden layers are not tunable, since these are determined by the size of received signal matrix and the number of hidden layers.}. The performance of our estimator is evaluated for two channel models, namely the LTE-EPA and Kronecker channel model, which is commonly used to model MIMO channels. However, we present most of our results only for the LTE-EPA model, because our empirical results show that there is not any significant performance difference between these two channel models. To generate a channel realization for the LTE-EPA model, we use the MATLAB\textsuperscript{\textregistered} LTE Toolbox, and obtain an $M \times 64\times64$ (antennas $\times$ subcarriers $\times$ symbols) channel matrix assuming that coherence time interval is larger than or equal to $64$ OFDM symbols. For the Kronecker channel model, we assume an exponential spatial correlation matrix at the base station with correlation coefficient $\rho=0.5$ without any loss of generality \cite{Loyka01}. 

As is the case for multi-cell massive MIMO, users in the same cell are assigned to the orthogonal pilot signals that can be non-orthogonal to the users in the neighboring cell. Our estimator does not put any constraint to the pilot arrangement, since pilots are not used while fitting the parameters of the DNN. Specifically, pilots are only used to perform LS estimation after the received signal is filtered via the DNN. To be more practical in the sense of not requiring any tight synchronization among base stations, a random pilot allocation is utilized for multi-cell massive MIMO such that each base station randomly and orthogonally spreads the pilot tones for its users throughout the OFDM grid per coherence time interval\footnote{Note that a block type pilot arrangement in which the the pilots are sent at the beginning of each coherence time interval gives the same performance with the same amount of pilot tones spread randomly throughout the OFDM grid in one coherence time interval.}. This can be easily done if the coherence bandwidth is equal to one subcarrier; otherwise more careful allocations have to be done so as to have a single pilot per coherence bandwidth. Random pilot arrangement is similar to the tone hopping (see \cite{tse2005fundamentals}), which is proposed to attain ergodic capacity bounds and diminish the impact of ``deep fades".

To simulate pilot contamination, we have assumed that a selected number of resource elements in the OFDM grid has a single dominant interferer outside the cell of the desired user signal. This dominant signal is chosen to have a signal of random QPSK symbols multiplied with its complex channel matrix which is simply another realization generated by the LTE-EPA model. This is indeed the worst-case scenario, because the covariance matrices of the user and interferer are fully overlapped. In the simulations, we also consider another pilot contamination scenario such that there is a contiguous interference both in the time and frequency domain over some number of resource elements in the OFDM grid. 

\subsection{Results} \label{Results}
The performance of the proposed channel estimator is first observed for traditional single antenna communication such that single antenna users transmit/receive OFDM symbols to/from a single antenna base station, i.e., $M = 1$. This enables us to quantify how the number of parameters scales with the number of antennas. We then proceed to the case of single cell massive MIMO. In other words, we consider the hypothetical scenario where all users in the cell are assumed to have orthogonal pilots and there is no pilot contamination at the base station. We finally demonstrate the robustness of our estimator in a multi-cell massive MIMO system, where pilot contamination occurs at the base station due to non-orthogonal pilot sequences employed by users in the neighboring cells. 

\subsubsection{Single antenna OFDM Communication}
We first highlight our results in the case of one antenna on the base station in the presence and absence of co-channel interference. For this case, the received signal or the output of the deep channel estimator is chosen to be a $2\times64\times64$ matrix, where 2 represents the real and imaginary part, the first 64 represents the number of subcarriers, and the next 64 is the number of OFDM symbols. More precisely, our architecture performs the operation outlined in \eqref{eq:Op1} for $l-1=5$ times, then the last hidden layer calculates the expression in \eqref{eq:Op2}, and finally the output layer brings the output signal to the required channel matrix size, i.e., $2\times64\times64$.

To find the optimum number of parameters in the absence of co-channel interference, we simply add AWGN to the desired signal, and adjust its variance so as to have an SNR between 0 and 20 dB range. In order to optimize the number of channels per layer or the value of $k$, we take a single channel realization disturbed by the least noise (i.e the highest SNR in our range) and observe the convergence of its NMSE with the number of epochs by performing Adam optimization \cite{kingma2014adam} with a learning rate of $0.01$. We find the number of epochs at which the lowest NMSE is achieved for a given $k$, and proceed to denoise the received signal for the aforementioned range of SNRs for the calculated number of epochs. This approach is often referred to as \textit{early stopping}. The number of epochs is tabulated in Table \ref{tab:hpt} as a function of $k$ and the total number of weights in the architecture. 

\begin{table}[H]
\caption{Hyperparameter Tuning for $M=1$}
\label{tab:hpt}
\begin{center}
\begin{tabular}{|c|c|c|}
\hline
$k$ & Epochs & Total Weight Count\\
\hline
8&2000&496\\
\hline
16&1300&1760\\
\hline
32&900&6592\\
\hline
64&250&25472\\
\hline
\end{tabular}
\end{center}
\end{table}

\begin{figure*}[!h]
\centering
\subfigure[Without interference]{
\label{fig:nik}
\includegraphics[width=5in]{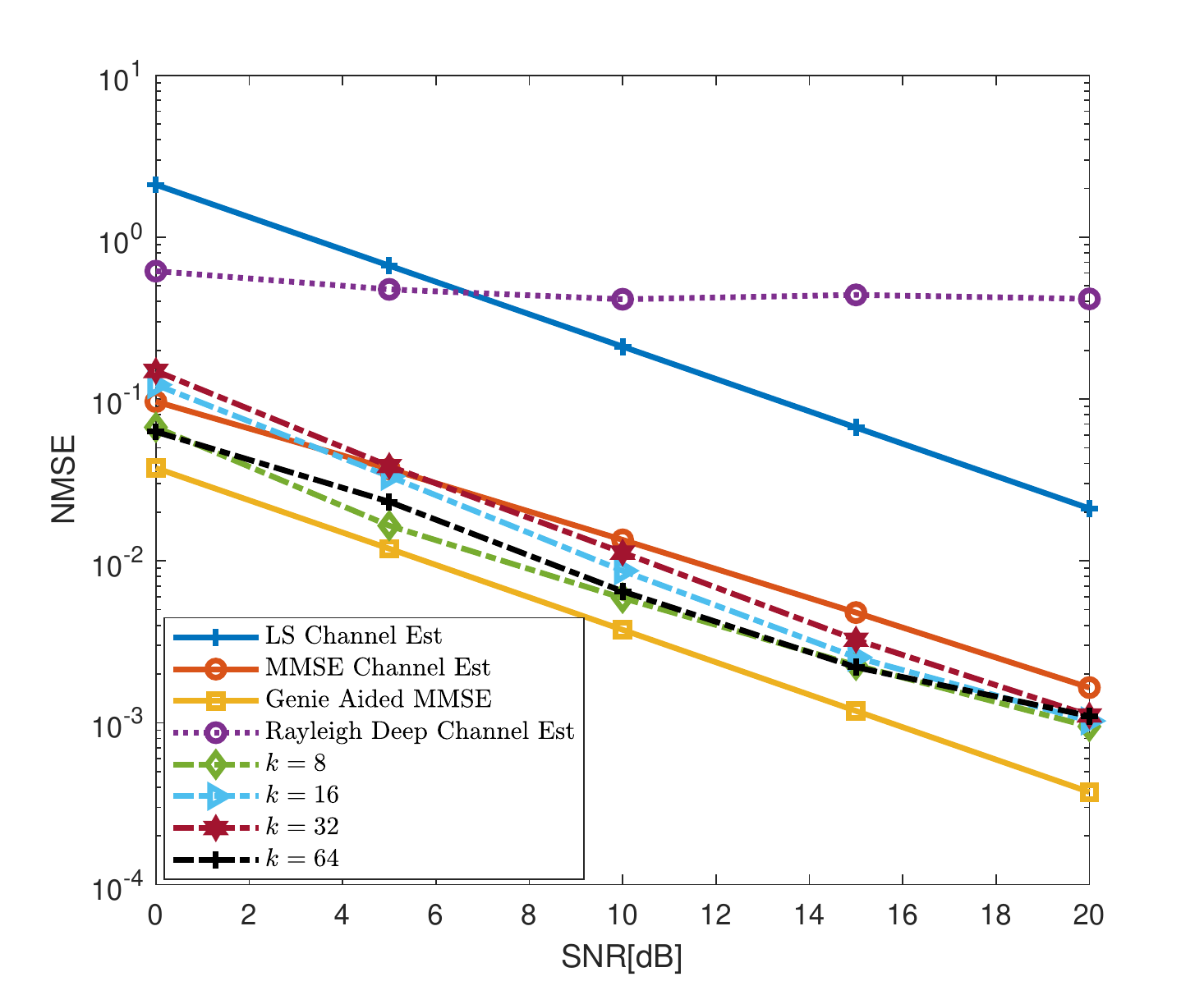}}
\qquad
\subfigure[With co-channel interference (SIR = 6 dB)]{
\label{fig:ik}
\includegraphics[width=5in]{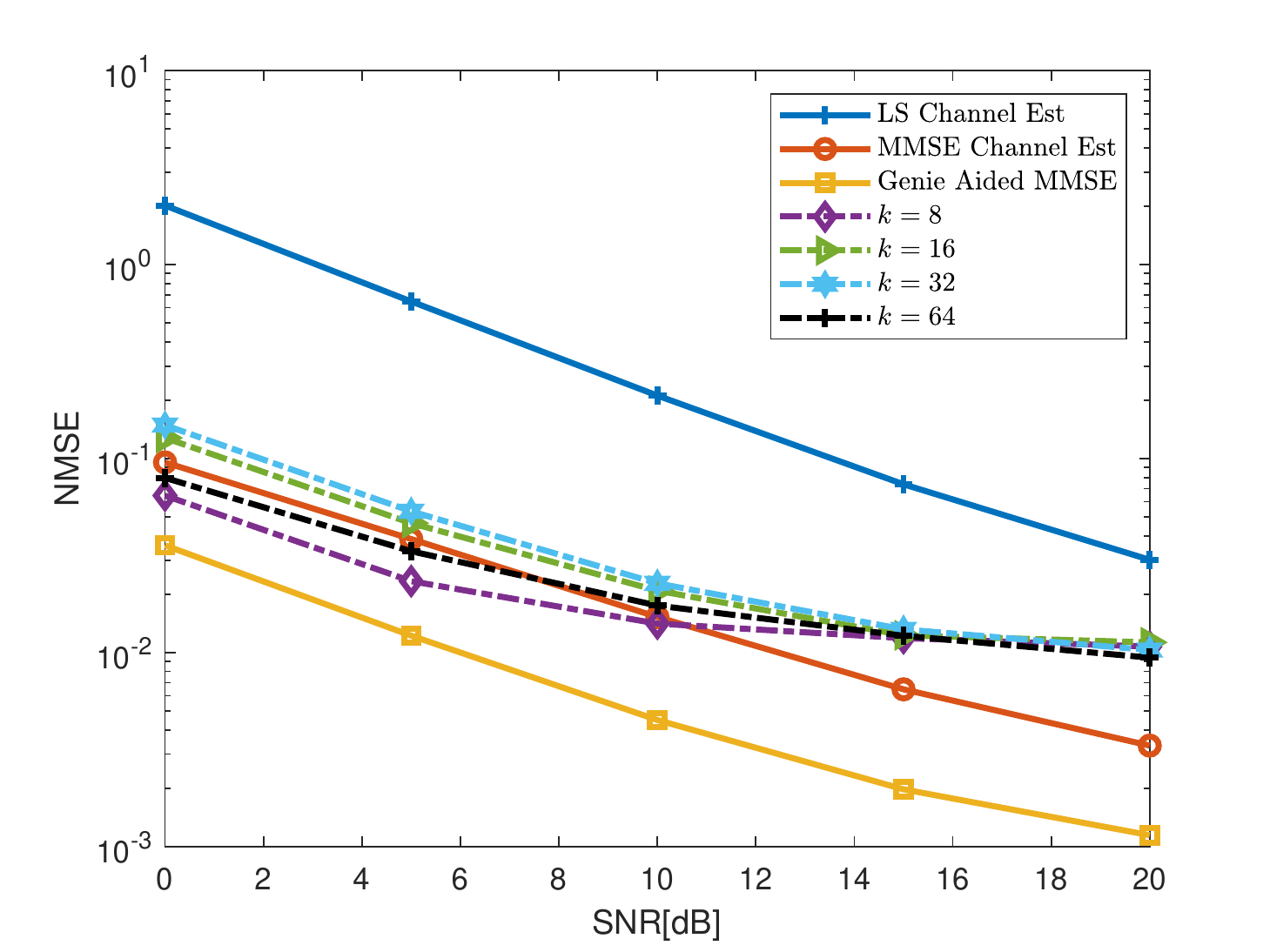}}
\caption{NMSE of the proposed estimator for different $k$ and $M=1$ with respect to SNR in comparison to LS and MMSE estimators.}\label{fig:k}
\end{figure*}
As depicted in Fig. \ref{fig:nik}, the NMSE is lowest for $k = 8$, gets progressively higher for $k = 16$ and $k = 32$, and once again decreases for $k = 64$, almost equal to $k = 8$ at an SNR of 0 dB. However, there is very little to tell apart the different architectures at SNR of 20 dB. These performance statistics can be somewhat explained by the following insight: larger noise levels require smaller values of $k$. If the noise is significantly larger, then we can either choose smaller $k$ or use early stopping. In this plot, the MMSE curve is obtained using the channel correlation matrix that is computed via Monte Carlo simulation as outlined in \cite{Beek95}, whereas the ``Genie Aided MMSE" assumes that the channel correlation matrix is available for free, which is highly impractical. Promisingly, our channel estimator for $k=8$ clearly outperforms LS and MMSE estimators and approach the `Genie Aided MMSE" performance without having any (statistical) information other than the received signal. To have a better understanding of why the deep channel estimator works so well, we observe its performance for an unrealistic case, in which each subcarrier in the OFDM grid has an i.i.d Rayleigh fading channel. In this case, our estimator does not perform well, which proves that its success is attributed to exploiting correlations.

To find the optimum number of parameters in the presence of co-channel interference, 10\% of the resource elements of the OFDM grid expressed in \eqref{OFDM_grid} are corrupted by injecting an interference that is $6$ dB weaker than the desired signal, i.e.,  SIR = 6 dB. As shown in Fig. \ref{fig:ik}, clearly $k=8$ outperforms $k=64$ for SNRs less than 10 dB, after which their performance is similar. Hence, it is reasonable to take $k = 8$ in the architecture for the case where $M=1$. We observed that with the addition of co-channel interference, stopping earlier than was ascertained in the interference-free case could be beneficial, however we did not change the number of epochs for which the training was performed. This is because in a practical scenario, where we do not have access to the noiseless received signal, we cannot ascertain when to stop, it has to be determined beforehand. Even without dynamically adapting early stopping, the deep channel estimator with $k=8$ beats MMSE estimator up to 10 dB, which also means that it has better interference mitigation capability.

\subsubsection{Single-cell Massive MIMO}
The deep channel estimator is mainly intended for multiple antennas in this paper. Thus, $M$ is set to $64$ and a $128\times64\times64$ matrix is obtained by concatenating the real and imaginary part of the signal with the antennas in the spatial axis. Here, the spatial domain is used to stack up the real and imaginary domain, because this axis is more appropriate for uncorrelated samples in the architecture. In this case, we first observe the impact of an increased number of pilots by varying $N_{p} = 1,2,4$ under the assumption of block type pilot arrangement. Accordingly, the pilots are transmitted periodically over all the subcarriers assuming that the coherence bandwidth is equal to one subcarrier without any loss of generality. The results are shown in Fig. \ref{fig:np}. Clearly an increase from $N_{p} = 1$ to $N_{p} = 4$ benefits the NMSE, but no benefit is obtained beyond that. This is an artifact of the LTE-EPA model which has very high temporal correlation, and consequently needs very few pilots in the time domain to represent the channel accurately. For the rest of the experiments\footnote{A random pilot arrangement is used instead of a block type pilot arrangement in the case of multi-cell massive MIMO, which refers to that pilot tones are allocated to the subcarriers that belong to different OFDM symbols.}, we adopt $N_{p} = 1$. Notice that although there is a single OFDM pilot symbol, orthogonal pilot sequences can still be formed using the frequency domain thanks to the OFDM. 
\begin{figure} [!t] 
\centering 
\includegraphics [width=5in]{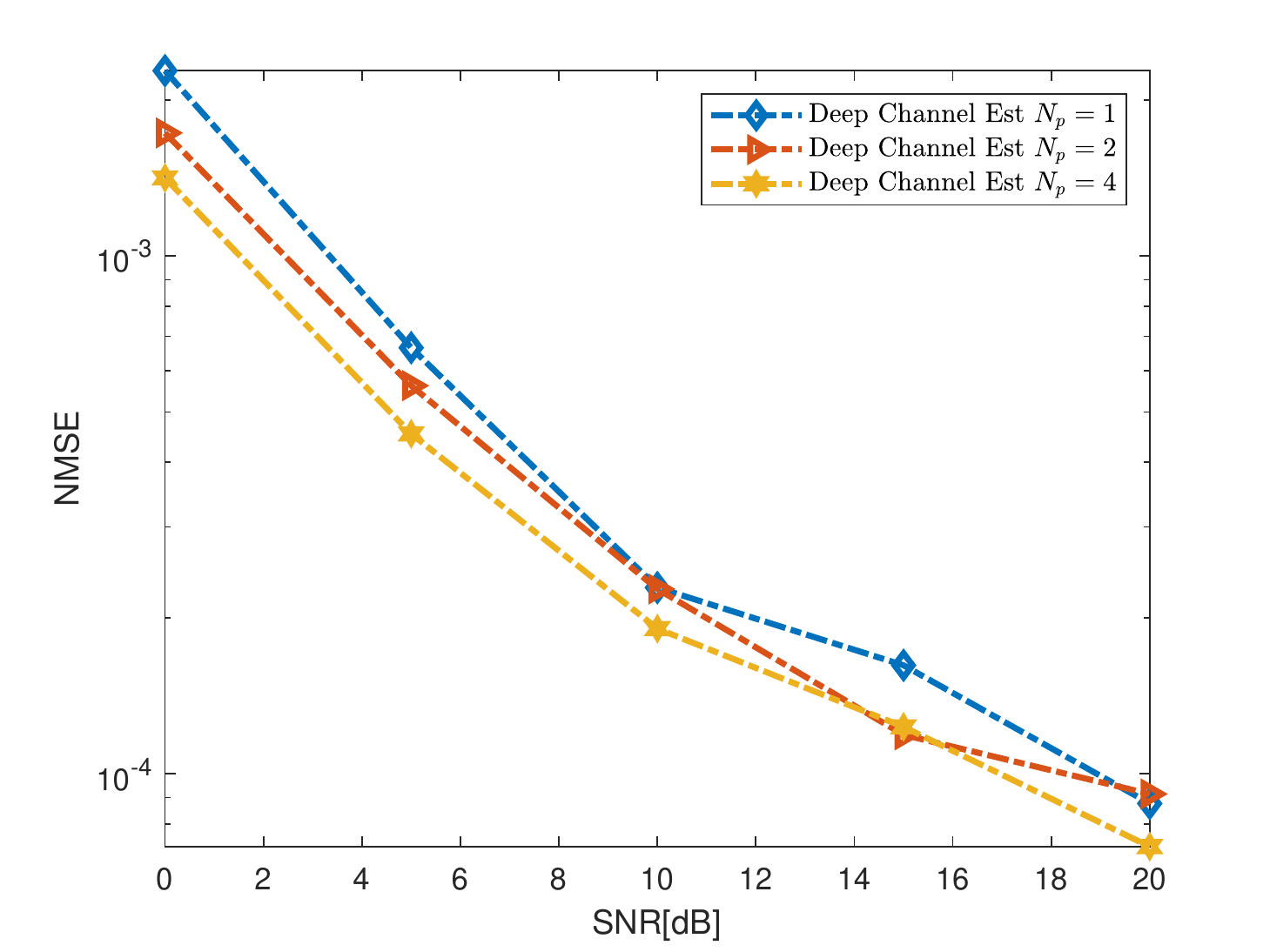}
\caption{NMSE of the proposed estimator for different amount of pilots $N_{p}$.}\label{fig:np}
\end{figure}

Following the same procedure that was adopted for single antenna communication, the optimum number of parameters is determined first, which is tabulated in Table \ref{tab:mhpt} in terms of $k$, the number of epochs and the total number of weights. In what follows, the NMSE as a function of SNR is plotted for different values of $k$. As depicted in Fig \ref{fig:mnik}, the results perfectly reconcile with the single antenna case. That is, at larger noise levels (or lower SNR), smaller values of $k$ perform much better. However at higher SNR, due to early stopping, all the architectures tend to the same NMSE, with the higher $k$ ones performing slightly better. Accordingly, $k = 16$ appears to have the best performance. 
\begin{table}[ht]
\caption{Hyperparameter Tuning for $M=64$}
\label{tab:mhpt}
\begin{center}
\begin{tabular}{|c|c|c|}
\hline
$k$ & Epochs & Total Weight Count\\
\hline
8&4000&1504\\
\hline
16&1970&3776\\
\hline
32&1800&10624\\
\hline
64&1000&33536\\
%\hline
%128&600&116224\\
\hline
\end{tabular}
\end{center}
\end{table}
\begin{figure} [!t] 
\centering 
\includegraphics [width=5in]{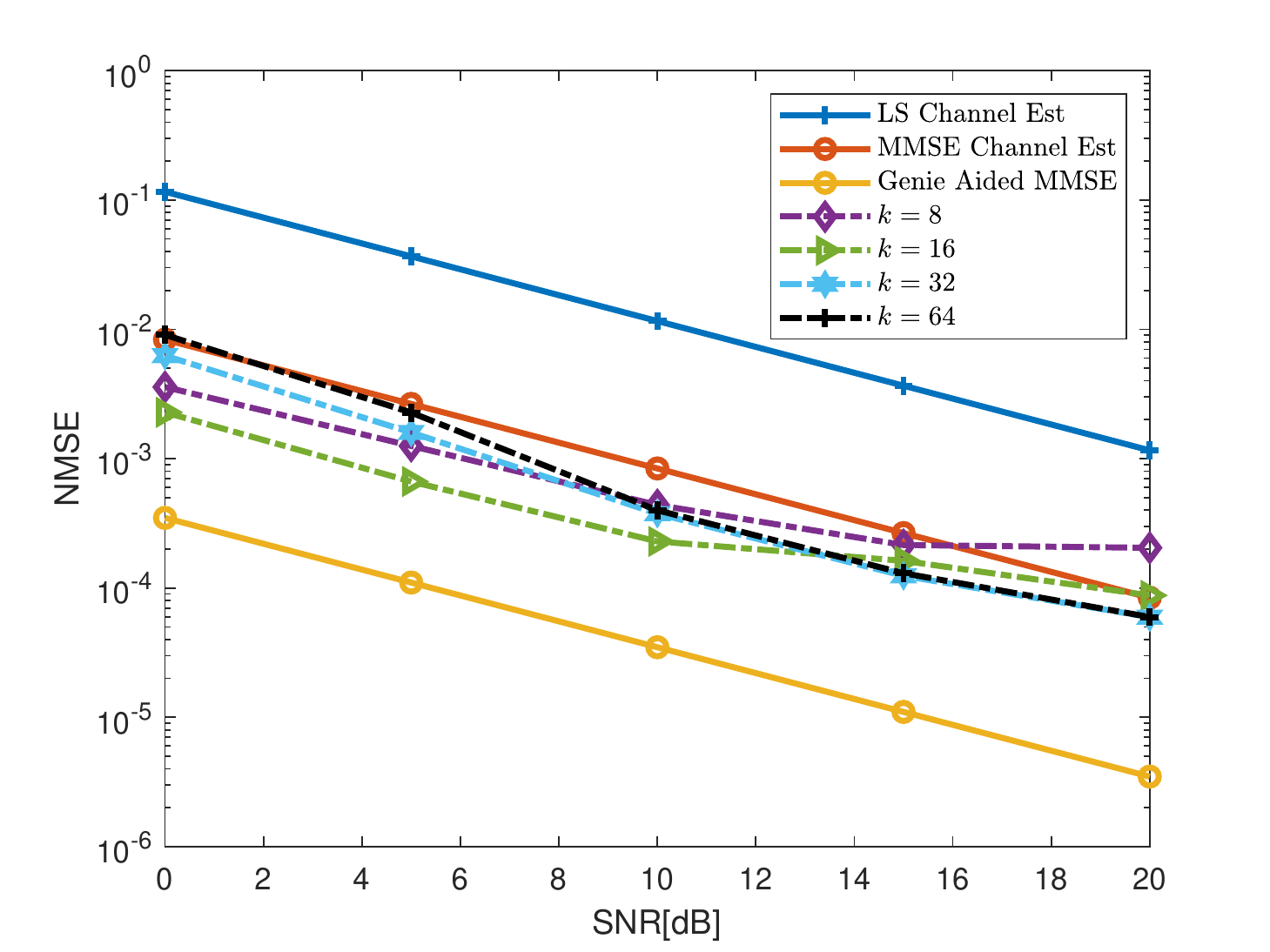}
\caption{NMSE of the proposed estimator for different $k$ and $M=64$ with respect to SNR in comparison to LS and MMSE estimators.}\label{fig:mnik}
\end{figure}

We repeat this experiment for the Kronecker channel model by taking the exponential spatial correlation matrix at the base station with correlation coefficient $\rho=0.5$. However, as shown in Figure. \ref{fig:mk16}, the performance of the deep channel estimator is almost unaffected by the change in spatial correlation of the channel matrix. This is because oversampling is not utilized in this domain. Hence, for brevity the results are only presented for the LTE-EPA model in the rest of the paper.
\begin{figure} [!t] 
\centering 
\includegraphics [width=5in]{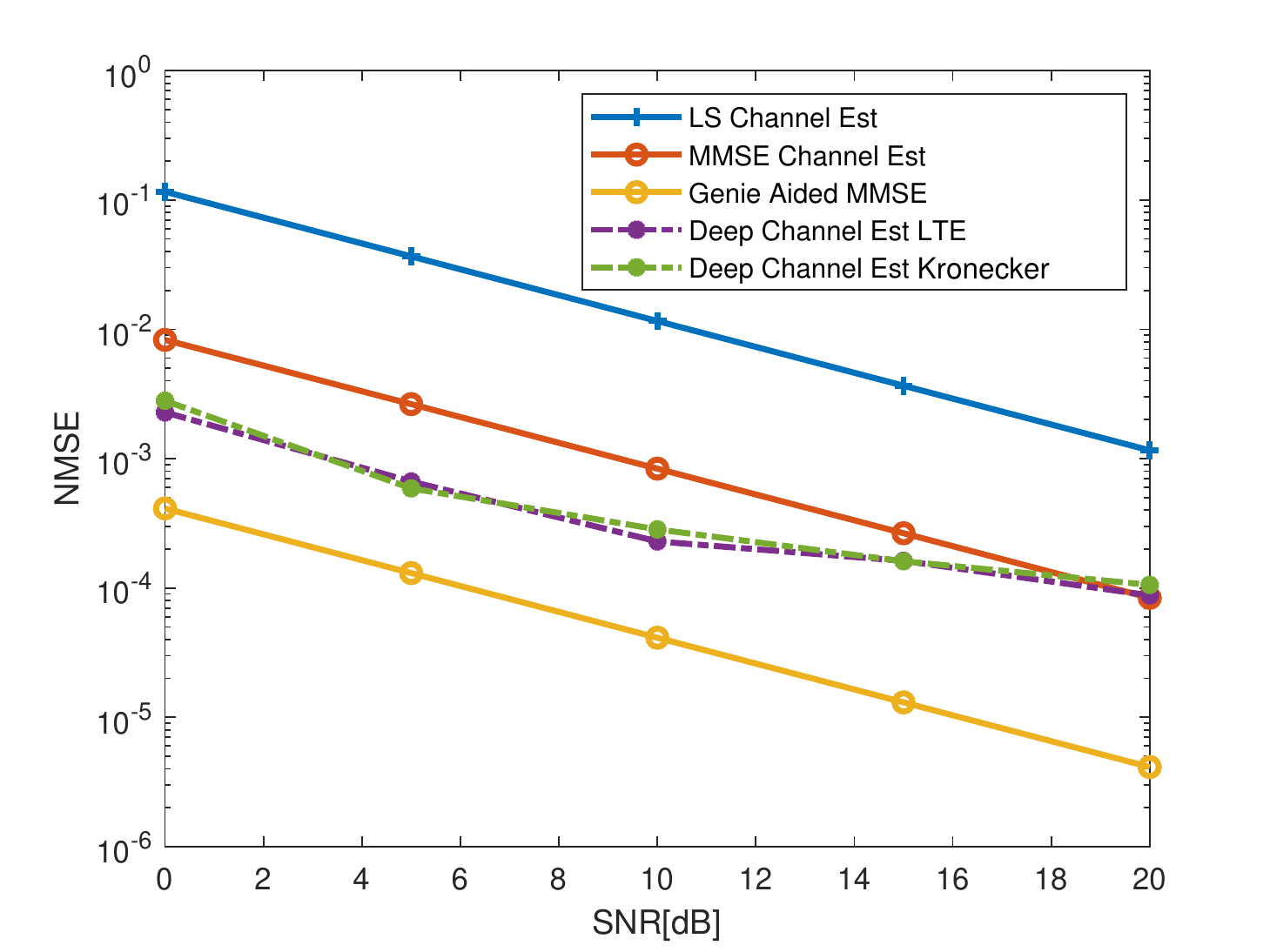}
\caption{NMSE of the proposed estimator for different channel models.}\label{fig:mk16}
\end{figure}

\subsubsection{Multi-cell Massive MIMO} \label{subsec:mcmm}
To assess the robustness of the deep channel estimator against pilot contamination, we first search for the optimum value of $k$, and then exhibit the results. Since base stations allocate random pilots that are spread over the OFDM grid in one coherence time interval, the optimum value of $k$ is searched after contaminating 5\% of the time-frequency grid randomly (but across all antennas) with interference at an SIR of 6 dB. In particular, we checked whether $k=16$ is the optimal architecture as in the case of single cell massive MIMO. 
%As can be seen in Fig. \ref{fig:mik}, clearly 
We found $k = 16$ to outperform all the other architectures, hence the architecture is optimized with $k=16$ for the rest of the multi cell massive MIMO experiments. This experiment is extended by also contaminating 10\% of the OFDM grid for $k=16$. The results for both 5\% and 10\% contamination are presented in Fig. \ref{fig:mik}. In this case, the deep channel estimator outperforms MMSE estimator up to an SNR of 7 dB even in the presence of up to 10\% pilot contamination. The flattening out of the NMSE curve with increased interference is due to not patching the signal in the areas corrupted by interference beyond a certain limit. In the image processing, this corresponds to an upper bound on the size of patches that can be recovered by region inpainting.
\begin{figure*}[!t]
\centering
\subfigure[Random pilot contaminations at SIR = 6 dB. Reference curves are in the presence of 10\% interference at SIR = 6 dB.]{
\label{fig:mik}
\includegraphics[width=5in]{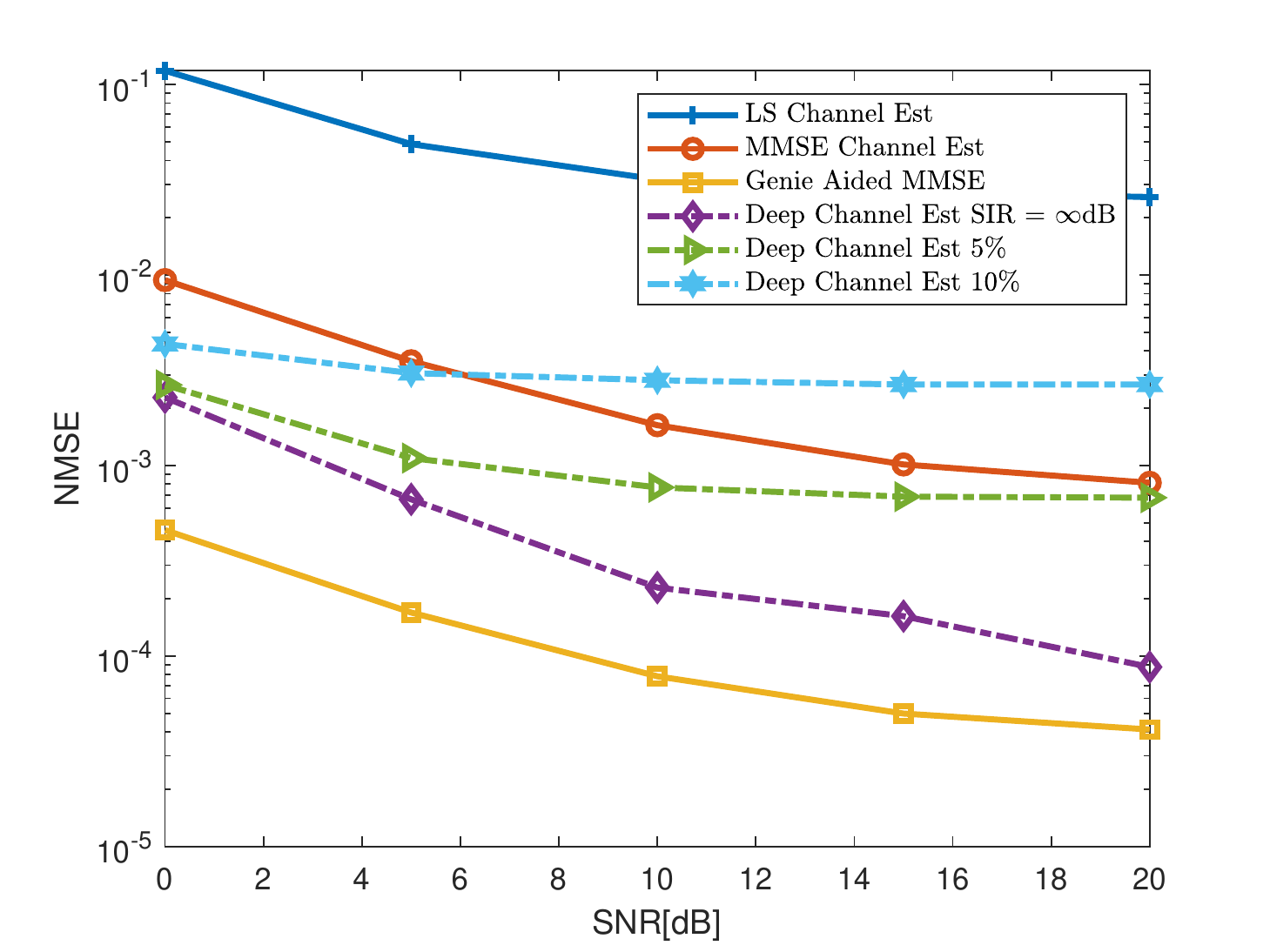}}
\qquad
\subfigure[Contiguous pilot contamination. Reference curves are at SIR = 10 dB.]{
\label{fig:mk16ci}
\includegraphics[width=5in]{mimo_k_16_contiguous}}
\caption{NMSE of the proposed estimator for $k=16$ and $M=64$ with respect to SNR in comparison to LS and MMSE estimators.}
\end{figure*}

To further quantify the pilot contamination performance of our estimator, we verify its robustness for a different power allocation method. Accordingly, pilots are not only randomly but also contiguously distributed over the resource elements. To be more precise, 2 blocks of $8\times8$ squares (corresponding to $\sim3\%$  of the overall time-frequency grid) are chosen randomly, in which interference at SIRs of 10, 15 and 20dB is injected. Although the deep channel estimator in this case can tolerate lower powers of interference than the previous case, its performance, as illustrated in Fig. \ref{fig:mk16ci}, is still better than LS estimator for all SNRs and MMSE estimator up to an SNR of 6 dB for the SIRs that are greater than 10 dB. 
%\begin{figure} [!t] 
%\centering 
%\includegraphics [width=5in]{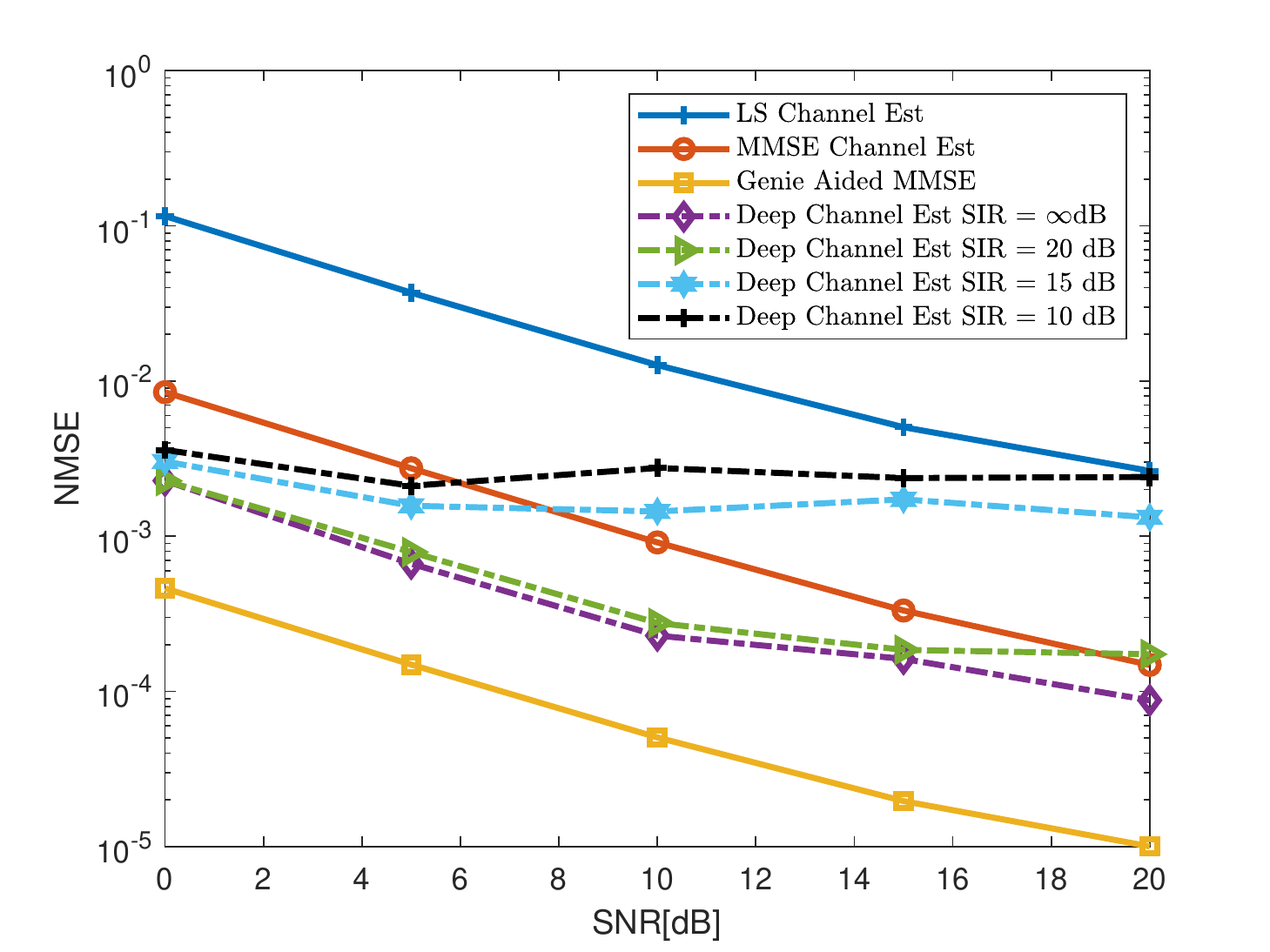}
%\caption{NMSE v/s SNR with contiguous pilot contamination for $k=16$ and $M=64$.}\label{fig:mk16ci}
%\end{figure}

\subsection{Complexity} \label{sec:disc}
Regarding the complexity of the deep channel estimator, it is important to note the trade-off between the number of parameters and the number of epochs required. As seen in Table \ref{tab:hpt}, a decoder of decreased complexity requires a larger number of epochs to attain the minimum NMSE. For instance, in the case of a single antenna the number of parameters for $k = 8$ are 496, in which the NMSE are the least, but it requires 2000 epochs to attain this NMSE. On the other hand, the $k = 64$ architecture has 25,472 parameters, and attains a slightly higher NMSE than $k=8$, but requires a mere 250 epochs to attain this NMSE. This result has in fact been proven for the case of supervised learning of a single hidden layer neural network in \cite{li2019gradient}, where they show that as the degree of overparameterization of the NN increases, it takes fewer epochs to converge to one of the many global minima in its objective function's landscape. As a result, if the deep channel estimator was to be deployed in a latency critical application and subject to online training, where a slightly higher NMSE could be tolerated, one should use the model with a higher $k$ value. 

For the case of $64$ antennas, the optimal architecture surfaces for $k=16$, which has only 3776 parameters but requires 1970 epochs to attain its lowest NMSE. On the other hand, $k=128$ has 116,224 parameters, but requires only 1000 epochs to attain its lowest achievable NMSE, which is much higher than $k=16$. For training around the same number of epochs such as 2000, the single antenna architecture has 496 parameters, while the massive MIMO architecture has 3776 parameters. This comparison is quite important, and specifies the sub-linear increase in computational complexity with the number of antennas. %This in itself is astounding because traditional DSP algorithms like LS scale at least linearly in number of computations.

\section{Conclusions} \label{Conclusions}
In this paper we proposed a novel deep channel estimator comprised of a DNN followed by a simple LS-type estimator.   This deep channel estimator exhibits superior performance compared to LS and MMSE estimators that have no inherent way of dealing with pilot contamination (or co-channel interference).  Promisingly, our low-complexity estimator performs better than more complex MMSE estimator, in which the channel correlation matrices are estimated from the samples, and even approaches the ``Genie-Aided MMSE" where the channel statistics are perfectly known for free.  The deep channel estimator appears to exploit correlations in the time-frequency grid very efficiently.  The strong performance is also explained by a supporting mathematical analysis. The salient features of the proposed estimator are as follows. The number of parameters scale at a rate less than the square root of number of antennas, which yields hundreds or thousands of weights as opposed to millions of parameters in conventional DNNs. Furthermore, the proposed estimator is appropriate for any environment or channel type, since it only needs the received signal and some pilots.

It would be interesting as future work to study the deep channel estimator for high mobility channels. Similarly, observing the performance of the deep channel estimator for mmWave channels seems intriguing. Furthermore, enhancing its interference mitigation capability can also be a good future research direction. In particular, some other dictionary learning algorithms can be adapted to our model. Additionally, it would be interesting to observe how our estimator performs when the eigenspace of the covariance matrices of interfering users does not fully overlap with the target user.

\end{document}